\newcommand{\hi}{\mathcal{H}}
\newcommand{\Poincup}{\mathcal{P}_+^\uparrow}
\newcommand{\Poincupd}{\overline{\mathcal{P}}_+^\uparrow}
\newcommand{\mink}{\mathbb{M}}
\newcommand{\Bor}{\text{Bor}}
\newcommand{\D}{\mathcal{D}}
\newcommand{\bh}{\mathcal{B}(\hi)}
\newcommand{\ldh}{\mathcal{L}(\D,\hi)}
\newtheorem{theorem}{Theorem} 
\newtheorem*{theorem*}{Theorem} 
\newtheorem{lemma}{Lemma} 
\newtheorem{corollary}{Corollary}
\newtheorem{definition}{Definition}
\theoremstyle{plain}
\newcommand{\sam}[1]{{\color{black}{#1}}}
\begin{document}

\title{No-go theorems for pointwise-defined spinorial quantum fields}
\author{Samuel Fedida}
\affiliation{Centre for Quantum Information and Foundations, DAMTP, Centre for Mathematical Sciences, University of Cambridge, Wilberforce Road, Cambridge CB3 0WA, UK}
\date{\today}

\begin{abstract}
    We extend and strengthen no-go results on pointwise-defined quantum fields to cover general spinors. We show that the weak continuity of quantum fields rules out equal-time canonical conjugate (anti)commutation relations in globally hyperbolic spacetimes; for quantum fields on Minkowski spacetime, weakly continuous translation covariance enforces the needed continuity and yields the same no-go. \sam{We then prove a fermionic microcausality no-go result: a weakly continuous pointwise fermionic field satisfying local spacelike anticommutation with its adjoint field on a $C^2$ Lorentzian spacetime must vanish.} We finish by generalising Wizimirski's no-go theorem to show that the existence of a Poincaré-invariant \sam{separating} vacuum precludes pointwise spinorial covariance on a Minkowski background. \sam{The result applies to Weyl and Dirac multiplets and to gauge-invariant field-strength multiplets such as the electromagnetic field strength and the linearised Weyl curvature. Gauge potentials are covered only when exact tensor covariance, rather than covariance modulo gauge transformations, is imposed.}
\end{abstract}

\maketitle

\section{Introduction}

The mathematical formulation of quantum field theory confronts deep challenges when attempting to define quantum fields as operator-valued functions on spacetime. In particular, several no-go theorems -- originating in the works of Wightman \cite{wightman_theorie_1964} and Wizimirski \cite{wizimirski_existence_1966} and further developed in algebraic quantum field theory \cite{Halvorson2006col} -- demonstrate that under natural physical assumptions, pointwise-defined fields often lead to contradictions. While these results are well established for scalar fields, the cases of other spinorial fields requires separate scrutiny due to the different algebraic structures they obey, and to check which assumptions are fundamentally causing issues. 

This note extends no-go results to the context of more general spinorial quantum fields. We first review a result originally presented for bosonic canonical commutation relations on Minkowski spacetime (e.g. see \cite{Halvorson2006col}), and show that the weak continuity of quantum fields on globally hyperbolic Lorentzian spacetimes rules out both equal-time canonical commutation relations and canonical anticommutation relations. Consequently, in Minkowski spacetime, space translation covariance with respect to a weakly continuous representation\footnote{The assumption of the weak continuity of $U$ is mild and is ensured by assuming the existence of self-adjoint momentum generators $P_i$ using Stone's theorem \cite{stone_one-parameter_1932} (or by the SNAG theorem \cite{streater_pct_1989} in relativistic settings). Note that for a unitary representation of the translation group, strong and weak continuity are equivalent.} alone straightforwardly rules out the possibility of realising equal-time conjugate commutation relations and conjugate anticommutation relations for any pointwise-defined quantum field. 

\sam{Motivated by Wightman's no-go theorem \cite{wightman_theorie_1964}, but using a different and simpler argument, we prove} that fermionic microcausality implies that any weakly continuous fermionic quantum field on time-oriented $C^2$ Lorentzian spacetimes must vanish. This implies that quantum fields on Minkowski that are translation-covariant with respect to a weakly continuous unitary representation cannot be fermionically microcausal. This, in turn, provides a deeper justification for the rejection of microcausality: other usual assumptions (the spectrum condition, translation covariance, the uniqueness of the vacuum) are not necessary to rule out such an ontic (rather than epistemic) implementation of the no-superluminal signalling principle. 

We finish by extending Wizimirski's theorem \cite{wizimirski_existence_1966} to other spinors, proving that the existence of a Poincaré-invariant \sam{separating} vacuum prohibits pointwise Poincaré covariance for multiplicity-free nontrivial finite-dimensional representations of the spin group in $1+3$ dimensions. This covers Weyl and Dirac spinors, photons as well as gravitons on a Minkowski background. 

The statements below are intended for pointwise-defined fields. In standard axiomatic quantum field theory, the fields are operator-valued distributions from the outset and are only defined after smearing with test functions \cite{wightman_theorie_1964,streater_pct_1989,Halvorson2006col}. \sam{One does not have an operator $\hat{\phi}(x)$ for each point $x \in \mathcal{M}$; rather, one has smeared operators $\hat{\Phi}(f)$ with $f \in C^\infty_c(\mathcal{M})$. The limiting arguments used below cannot be run in the smeared setting, because point evaluation would require testing the distribution against
$\delta_x$, which is not an element of $C_c^\infty(\mathcal M)$. Correspondingly, canonical relations of Wightman quantum fields take a smeared form, and spacelike (anti)commutativity is imposed for test functions with spacelike-separated supports rather than pointwise.} \\

Such distributional fields thus evade the point-wise no-go theorems below. We emphasise the present results simply as obstructions to the existence of genuine operator-valued functions on spacetime that obey the canonical (anti)commutation relations, the spacelike (anti)commutativity constraints or the pointwise covariance properties; they do not contradict the existence of usual smeared quantum fields such as the free scalar, Dirac or Maxwell fields in Wightman or algebraic quantum field theory. They also do not cover the construction of quantum fields as sesquilinear forms \cite{rehberg_quantum_1986, wollenberg_quantum_1986}.

We briefly mention one recently introduced axiomatic approach to quantum field theory which does manipulate covariant and causal pointwisely-defined quantum fields, namely relational quantum field theory \cite{fedida_foundations_2025}. The no-go theorems discussed below are however avoided in somewhat subtle ways. For instance, the pointwisely-defined relational quantum fields are \emph{not} generally weakly continuous, so microcausality can be imposed on those under certain assumptions.
Likewise, Poincaré covariance can be imposed pointwise with, however, a shift in the quantum reference frame describing the fields; thus, the \emph{operator-valued functions} are \emph{not} pointwisely covariant. The no-go results presented in this note can hence also be seen as useful tools for theory building to understand which model-specific assumptions concerning causality and covariance can be safely considered. 

\sam{These no-go results are also distinct from Haag's theorem \cite{haag_quantum_1955,streater_pct_1989}. Haag's theorem concerns the incompatibility of an interaction-picture unitary equivalence between free and interacting fields under the usual relativistic assumptions. The results below do not compare two theories or two representations and do not rule out interactions. They isolate a different obstruction at a functional analytic level: the attempt to impose canonical, causal, or covariance relations directly on fields treated as pointwise operator-valued functions rather than as operator-valued distributions leads to issues. Thus the standard distributional formulation of QFT evades the present pointwise no-go theorems, while Haag's theorem addresses a separate obstruction related to the interaction picture (and, in particular, does apply also to distributional fields).}

\sam{The paper is structured as follows. Section \ref{sec:ccr} highlights an obstruction to canonical equal-time (anti)commutation relations for weakly continuous pointwise quantum fields on globally hyperbolic spacetimes. We review the scalar Wightman and Wizimirski theorems in Sections \ref{sec:microcausality} and \ref{sec:covariance} to compare them with the new results of this paper. The new contributions are Thm.~\ref{thm:Wight}, which gives an obstruction to fermionic microcausality on $C^2$ Lorentzian spacetimes without assuming a spectrum condition, translation covariance, nor the existence of a vacuum; and Thms.~\ref{thm:chiral} and \ref{thm:gravitons}, which give obstructions to Poincaré (spinor) covariance for finite-dimensional $SL(2,\mathbb C)$ multiplets.}

In this paper, we work with $d$-dimensional spacetimes in mostly minus signature, $d \geq 2$, except in the last section where we restrict ourselves to $1+3$ spacetime dimensions. We write $(\mathcal{M},g)$ for a Lorentzian spacetime and $\mink$ for Minkowski spacetime. We denote by $\bh$ the space of bounded operators on a Hilbert space $\hi$ and $\ldh$ the linear maps from a dense domain $\D \subset \hi$ to $\hi$. Given a unitary representation $U$ of a group $G$ on $\hi$ we write $G \stackrel{U}{\curvearrowright} \hi$. We write $\comm{A}{B} = AB - BA$ and $\acomm{A}{B} = AB + BA$. We denote the proper orthochronous Poincaré group by $\Poincup$ and its double cover by $\Poincupd$. \sam{Throughout Sec.~\ref{sec:covariance}, ``spinor" is used in the representation-theoretic sense of a finite-dimensional $SL(2,\mathbb C)$ multiplet. This includes ordinary spinors as well as tensor representations.}

\section{Canonical (Anti)Commutation Relations}
\label{sec:ccr}

We here follow the proof from \cite{Halvorson2006col} which covered the bosonic case of conjugate commutation relations, with almost no alterations to cover the fermionic case in globally hyperbolic spacetimes beyond translation covariance -- we simply include the following result for completeness.

\begin{theorem}
    \label{thm:can anti rel}
    Let $(\mathcal{M},g)$ be a globally hyperbolic Lorentzian spacetime \sam{of dimension $d \geq 2$} and $\{\Sigma_t\}
    _{t \in \mathbb{R}}$ be a foliation of $\mathcal{M}$ into spacelike Cauchy hypersurfaces, and $\D$ be a dense subset of $\hi$. Let $\hat{\psi} : \mathcal{M} \to \ldh$ and $\hat{\pi} : \mathcal{M} \to \ldh$ be such that\footnote{Each $\hat{\psi}(x)$ can, in principle, have different domains, so we consider here their restriction to their joint domain $\D$ which must be dense in $\hi$, and likewise for $\hat{\pi}$.} 
    \begin{enumerate}
        \item $\hat{\psi}(x) \D \subset \D$ and $\hat{\psi}(x)^\dagger \D \subset \D$ for all $x \in \mathcal{M}$,\footnote{This assumption, though restrictive, is common e.g. in constructive quantum field theory \cite{streater_pct_1989}.}
        \item $\hat{\pi}(x) \D \subset \D$ and $\hat{\pi}(x)^\dagger \D \subset \D$ for all $x \in \mathcal{M}$,
        \item the restriction $\hat{\pi}|_{\Sigma_t}$ is weakly continuous\footnote{That is, the map $x \in \Sigma_t \mapsto \bra{\phi}\hat{\pi}(x)\ket{\psi}$ is continuous for all $\ket{\phi},\ket{\psi} \in \D$ and all $t \in \mathbb{R}$.} $\forall t \in \mathbb{R}$.
    \end{enumerate}
    Then for $t \in \mathbb{R}$ and $x \in \Sigma_t$,
    \begin{multline*}
        \comm{\hat{\psi}(x)}{\hat{\pi}(y)} = 0 \quad \forall y \in \Sigma_t \smallsetminus \{x\} \\ \Leftrightarrow \comm{\hat{\psi}(x)}{\hat{\pi}(y)} = 0 \quad \forall y \in \Sigma_t\, 
    \end{multline*}
    and
    \begin{multline*}
        \acomm{\hat{\psi}(x)}{\hat{\pi}(y)} = 0 \quad \forall y \in \Sigma_t \smallsetminus \{x\} \\ \Leftrightarrow \acomm{\hat{\psi}(x)}{\hat{\pi}(y)} = 0 \quad \forall y \in \Sigma_t\, ,
    \end{multline*}
    where these (anti)commutators are understood on $\D$.
\end{theorem}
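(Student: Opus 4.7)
The reverse implication is trivial in both the commutator and anticommutator cases, so my plan is to treat only the forward direction. The strategy is to promote the vanishing of $\comm{\hat\psi(x)}{\hat\pi(y)}$ on $\D$ for $y \in \Sigma_t \setminus \{x\}$ to vanishing at $y = x$ by a density-plus-continuity argument at the level of matrix elements, using nothing beyond assumption 3 together with the domain-invariance assumptions 1 and 2.

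Concretely, for arbitrary $\ket{\phi}, \ket{\psi} \in \D$ I would introduce the scalar function
\[
f(y) := \bra{\phi} \comm{\hat\psi(x)}{\hat\pi(y)} \ket{\psi} = \bra{\hat\psi(x)^\dagger \phi} \hat\pi(y) \ket{\psi} - \bra{\phi} \hat\pi(y) \ket{\hat\psi(x) \psi},
\]
which is well defined for every $y \in \Sigma_t$ because assumption 1 keeps $\hat\psi(x)^\dagger \ket{\phi}$ and $\hat\psi(x) \ket{\psi}$ inside the joint domain $\D$ on which $\hat\pi(y)$ acts. Assumption 3 then makes each of the two terms continuous in $y \in \Sigma_t$, and hence so is $f$. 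Since $\Sigma_t$ is a spacelike Cauchy hypersurface of dimension $d-1 \geq 1$, the singleton $\{x\}$ has empty interior and $\Sigma_t \setminus \{x\}$ is dense in $\Sigma_t$; combined with continuity of $f$ and its assumed vanishing away from $x$, this forces $f(x) = 0$, and the arbitrariness of $\ket{\phi}, \ket{\psi} \in \D$ then yields $\comm{\hat\psi(x)}{\hat\pi(x)} = 0$ on $\D$.

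The anticommutator case would go through verbatim with the sign of the second term reversed, since continuity in $y$ of each term individually is all that is used. I do not expect any genuine obstacle: the only delicate point is to notice that the asymmetry between assumptions 1--2 (domain invariance of both fields, including $\hat\psi(x)^\dagger$) and assumption 3 (weak continuity of $\hat\pi$ on $\Sigma_t$ but not of $\hat\psi$) is essential, as one needs to pass $\hat\psi(x)$ onto the bra and ket so that weak continuity of the remaining field in $y$ can be invoked. Everything else is purely mechanical.
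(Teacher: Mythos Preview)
Your proposal is correct and follows essentially the same route as the paper: the paper likewise uses domain invariance to render the (anti)commutator weakly continuous in $y$ and then passes to the limit $y\to x$ along $\Sigma_t\setminus\{x\}$, using a sequence $y_n\to x$ where you invoke density of $\Sigma_t\setminus\{x\}$. Your version is in fact slightly more explicit about \emph{why} weak continuity of the (anti)commutator holds, namely by moving $\hat\psi(x)$ and $\hat\psi(x)^\dagger$ onto the bra/ket via assumption~1 so that only the weak continuity of $\hat\pi$ in $y$ is needed, a point the paper glosses over with the phrase ``by domain invariance''.
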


\begin{proof}
    Let us consider the anticommutation case -- the commutation case follows analogously. The $\Leftarrow$ direction is immediate, so we consider the other direction. Let $t \in \mathbb{R}$ and $x \in \Sigma_t$. The map
    \begin{align}
        f_{x} : \Sigma_t &\to \ldh \nonumber \\
        f_{x}(y) &:= \acomm{\hat{\psi}(x)}{\hat{\pi}(y)} \nonumber 
    \end{align}
    is weakly continuous on $\Sigma_t$ by domain invariance and since $\hat{\pi}|_{\Sigma_t}$ is weakly continuous. Let $(y_n)_{n \in \mathbb{N}}$ be a sequence in $\Sigma_t \smallsetminus \{x\}$ such that $y_n \to x$. Since $f_{x}$ is weakly continuous and $f_x(y_n) = 0$ for all $n \in \mathbb{N}$, we have
    \begin{equation}
        \acomm{\hat{\psi}(x)}{\hat{\pi}(x)} = f_x(x) = \lim_{n \to \infty} f_x(y_n) = 0 \, , 
    \end{equation}
    which concludes the proof.
\end{proof}

Note that the proof also holds in other topologies, notably if $\hat{\pi}$ is strongly continuous. A direct corollary of the above is the fermionic counterpart of the result in \cite{Halvorson2006col} (restricted to bounded quantum fields for conciseness).

\begin{corollary}
    Suppose $d \geq 2$ and $\mathbb{R}^{d-1} \stackrel{U}{\curvearrowright} \hi$. Let $\hat{\psi} : \mink \to \bh$ and $\hat{\pi} : \mink \to \bh$ be such that
    \begin{enumerate}
        \item $U$ is a weakly continuous unitary representation of $\mathbb{R}^{d-1}$,
        \item $U(\vec{y}) \hat{\pi}(t,\vec{x}) U(\vec{y})^\dagger = \hat{\pi}(t,\vec{x} +\vec{y})$ $\forall \vec{x},\vec{y} \in \mathbb{R}^{d-1}, t \in \mathbb{R}$.
    \end{enumerate}
     Then
    \begin{multline*}
        \comm{\hat{\psi}(t,\vec{x})}{\hat{\pi}(t,\vec{y})} = 0 \quad \forall \vec{x}\neq \vec{y} \in \mathbb{R}^{d-1} \\ \Leftrightarrow \comm{\hat{\psi}(t,\vec{x})}{\hat{\pi}(t,\vec{y})} = 0 \quad \forall \vec{x}, \vec{y} \in \mathbb{R}^{d-1} 
    \end{multline*}
     and
    \begin{multline*}
        \acomm{\hat{\psi}(t,\vec{x})}{\hat{\pi}(t,\vec{y})} = 0 \quad \forall \vec{x}\neq \vec{y} \in \mathbb{R}^{d-1} \\ \Leftrightarrow \acomm{\hat{\psi}(t,\vec{x})}{\hat{\pi}(t,\vec{y})} = 0 \quad \forall \vec{x}, \vec{y} \in \mathbb{R}^{d-1} \, .
    \end{multline*}
\end{corollary}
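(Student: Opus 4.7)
The plan is to deduce this corollary directly from Theorem~\ref{thm:can anti rel} by verifying the theorem's three hypotheses in this specialised setting. Minkowski spacetime $\mink$ is globally hyperbolic, with the flat foliation $\Sigma_t = \{t\}\times\mathbb{R}^{d-1}$ by spacelike Cauchy hypersurfaces. Since the fields here are bounded, I can take the common invariant domain to be $\D = \hi$, which renders conditions~(1) and~(2) of Theorem~\ref{thm:can anti rel} (domain-invariance of $\hat{\psi}$, $\hat{\pi}$ and their adjoints) trivially satisfied. The substantive task is therefore condition~(3): weak continuity of $\hat{\pi}|_{\Sigma_t}$ for every $t\in\mathbb{R}$.

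To establish weak continuity, I would first invoke the equivalence between weak and strong continuity for unitary representations, already flagged in the paper's footnote: a weakly continuous unitary representation satisfies $U(\vec y)\ket\psi\to\ket\psi$ in norm as $\vec y\to 0$ for every $\ket\psi\in\hi$. Fixing $t\in\mathbb{R}$ and a reference point $\vec x_0\in\mathbb{R}^{d-1}$, and setting $\vec y = \vec x - \vec x_0$, translation covariance (hypothesis~2 of the corollary) gives
\begin{equation}
    \hat{\pi}(t,\vec x) = U(\vec y)\,\hat{\pi}(t,\vec x_0)\,U(\vec y)^\dagger \, .
\end{equation}
For any $\ket\phi,\ket\psi\in\hi$, strong continuity of $U$ yields $U(\vec y)^\dagger\ket\psi\to\ket\psi$ and $U(\vec y)^\dagger\ket\phi\to\ket\phi$ as $\vec y\to 0$; since $\hat{\pi}(t,\vec x_0)$ is bounded and the inner product is jointly continuous in the norm topology, this forces $\bra\phi\hat{\pi}(t,\vec x)\ket\psi\to\bra\phi\hat{\pi}(t,\vec x_0)\ket\psi$, and hence $\hat{\pi}|_{\Sigma_t}$ is weakly continuous at the arbitrary point $\vec x_0$.

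With all three hypotheses of Theorem~\ref{thm:can anti rel} verified, its conclusion delivers both the commutator and anticommutator equivalences stated in the corollary. I do not anticipate any serious obstacle: the corollary is in essence a bookkeeping reformulation, in which the combination of a weakly continuous translation representation with the spatial translation covariance of $\hat\pi$ produces exactly the spatial weak continuity that the theorem consumes. The only delicate point is the upgrade from weak to strong continuity of $U$, which is standard but worth making explicit so that conjugation by $U(\vec y)$ acts continuously in the weak operator topology on $\bh$.
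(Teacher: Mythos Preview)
Your proposal is correct and follows essentially the same route as the paper: verify that translation covariance together with the (weak, hence strong) continuity of $U$ renders $\hat{\pi}|_{\Sigma_t}$ weakly continuous, then invoke Theorem~\ref{thm:can anti rel}. Your explicit upgrade from weak to strong continuity of $U$ is a useful detail that the paper's one-line proof leaves implicit.
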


\begin{proof}
    The map $\vec{y} \mapsto \bra{\phi}U(\vec{y})\hat{\pi}(t,\vec{x}) U(\vec{y})^\dagger\ket{\psi}$ is continuous for all $\ket{\phi},\ket{\psi} \in \hi$ and all $(t,\vec{x}) \in \mink$ by the weak continuity of $U$, so the result follows from Thm. \ref{thm:can anti rel}.
\end{proof}

Thus, this no-go theorem is not based on some tension between quantum mechanics and relativity through some possibly problematic notion of ``equal times" since this holds for non-relativistic quantum fields \cite{Halvorson2006col}. 

Hence, the equal-time canonical conjugate (anti)commutation relations for translation-covariant fermionic (or bosonic) weakly continuous spinors
\begin{equation}
    \label{eqn:car}\acomm{\hat{\psi}_\alpha(t,\vec{x})}{\hat{\pi}_\beta(t,\vec{y})} = i \delta_{\alpha \beta} \delta^{(d-1)}(\vec{x}-\vec{y})
\end{equation}
cannot be satisfied. \sam{After taking matrix elements, the left-hand side is a continuous function of $\vec{y}$ under the hypotheses above. If it vanishes for $\vec{x}\neq\vec{y}$, Thm.~\ref{thm:can anti rel} forces it to vanish also at $\vec{x}=\vec{y}$. The right-hand side, however, contains the distribution $\delta^{(d-1)}(\vec{x}-\vec{y})$ which is not a continuous function, and vanishes for $\vec{x}\neq\vec{y}$ but not (distributionally) for $\vec{x}=\vec{y}$. This is precisely why the canonical relations are naturally formulated only after smearing with respect to a test function.}
This is also why in some axiomatic approaches to quantum field theory such as Wightman quantum field theory \cite{wightman_theorie_1964,streater_pct_1989}, quantum fields are instead seen as operator-valued distributions. 

\section{Microcausality}
\label{sec:microcausality}

Microcausality is one (out of many) quantum-mechanical implementations of the relativistic no-signalling principle. It imposes what is fundamentally an epistemic principle (no-signalling) to be applied at an ontological level (at least, provided the field operators carry ontological weight). The no-go theorem by Wightman \cite{wightman_theorie_1964,Halvorson2006col}, subsequently adapted to more general algebraic settings \cite{horuzhy_axiomatic_1990}, highlights that this is fundamentally inconsistent in the context of translation covariant bosonic quantum fields, as follows.

\begin{theorem*}[Wightman]
    Suppose $\mathbb{R}^d \stackrel{U}{\curvearrowright} \hi$. Let $\hat{\phi} : \mink \to \bh$.\footnote{The theorem is here expressed for bounded quantum fields, though an extension to the unbounded case is straightforward given appropriate domain and continuity assumptions.} If
	\begin{enumerate}
        \item $U$ is a weakly continuous unitary representation of $\mathbb{R}^d$,
		\item $\comm{\hat{\phi}(x)}{\hat{\phi}(y)} = 0$ if $(x-y)^2<0$,
		\item $U(y) \hat{\phi}(x) U(y)^\dagger = \hat{\phi}(x+y)$ for all $x,y \in \mink$,
		\item $U$ satisfies the spectrum condition\footnote{The spectrum condition states that $\sigma(P_\mu) \subset \overline{V_+} = \{p \mid p^0 \geq 0, p^2 \geq 0\}$, i.e. the joint spectrum $\sigma(P_\mu)$ of the energy-momentum lies in the forward causal cone.},
		\item There exists a unique (up to scalar multiples) non-zero vector $\ket{\Omega} \in \hi$ such that $U(x) \ket{\Omega} = \ket{\Omega}$ for all $x \in \mink$.
	\end{enumerate}
	Then there exists a $c \in \mathbb{C}$ such that $\forall x \in \mink$,  \[ \hat{\phi}(x) \ket{\Omega} = c \ket{\Omega} \, .\] 
\end{theorem*}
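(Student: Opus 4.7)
The plan is to study the vacuum two-point function of the field, reduce the claim to the vanishing of a spectral measure, and exploit the interplay of microcausality with the spectrum condition via an edge-of-the-wedge argument. By translation covariance of $\hat\phi$ and $U(x)\ket{\Omega} = \ket{\Omega}$, the constant $c := \langle\Omega|\hat\phi(0)|\Omega\rangle$ equals $\langle\Omega|\hat\phi(x)|\Omega\rangle$ for every $x$, so I may replace $\hat\phi$ by $\hat\phi - c\,\id$ and reduce the target to $\hat\phi(x)\ket{\Omega} = 0$ under the additional normalisation $c = 0$. Introduce the two-point function $W(\xi) := \langle\Omega|\hat\phi(0)\hat\phi(\xi)|\Omega\rangle$; translation covariance yields $\langle\Omega|\hat\phi(x)\hat\phi(y)|\Omega\rangle = W(y-x)$. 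The weak continuity of $U$ together with SNAG's theorem provides a projection-valued measure $E$ on $\mathbb{R}^d$ with $U(\xi) = \int e^{ip\cdot\xi}\,dE(p)$; the spectrum condition gives $\supp E \subseteq \overline{V_+}$, and uniqueness of the vacuum forces $E(\{0\}) = |\Omega\rangle\langle\Omega|$. Since $\hat\phi(\xi)\ket{\Omega} = U(\xi)\hat\phi(0)\ket{\Omega}$, this yields the spectral representation
\[
W(\xi) = \int_{\overline{V_+}} e^{ip\cdot\xi}\,d\mu(p),
\]
for the complex measure $d\mu(p) := d\langle\hat\phi(0)^\dagger\Omega|E(p)|\hat\phi(0)\Omega\rangle$, with $\mu(\{0\}) = |c|^2 = 0$.

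Microcausality next enforces a reflection symmetry of $W$: for spacelike $\xi$, $\comm{\hat\phi(0)}{\hat\phi(\xi)} = 0$ gives $\langle\Omega|\hat\phi(0)\hat\phi(\xi)|\Omega\rangle = \langle\Omega|\hat\phi(\xi)\hat\phi(0)|\Omega\rangle$, and translation covariance rewrites the right-hand side as $W(-\xi)$. Hence $W(\xi) = W(-\xi)$ on the open set $\{\xi : \xi^2 < 0\}$. The spectral representation extends $W$ holomorphically to the forward tube $\mathbb{R}^d + iV_+$, since for $\zeta = \xi + i\eta$ with $\eta \in V_+$ and $p \in \overline{V_+}$ one has $p\cdot\eta \geq 0$, giving $|e^{ip\cdot\zeta}| = e^{-p\cdot\eta} \leq 1$ and uniform convergence on compacta; symmetrically, $\xi \mapsto W(-\xi)$ extends to the backward tube $\mathbb{R}^d - iV_+$. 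Both extensions are bounded (since $\hat\phi(0)$ is) with boundary values coinciding on the real open spacelike region.

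The main obstacle is to upgrade this local coincidence of tube-holomorphic functions into a global identity. I would appeal to Bogoliubov's edge-of-the-wedge theorem, which glues the two extensions across their common real spacelike open boundary into a single holomorphic function on a connected domain containing $(\mathbb{R}^d + iV_+) \cup (\mathbb{R}^d - iV_+) \cup \mathcal O$, where $\mathcal O$ is a complex neighbourhood of the spacelike region. By the identity theorem, $W(\xi) = W(-\xi)$ then holds throughout $\mathbb{R}^d$. Hence $\mu$ is invariant under $p \mapsto -p$, so $\supp \mu \subseteq \overline{V_+}\cap(-\overline{V_+}) = \{0\}$; combined with $\mu(\{0\}) = 0$ this forces $\mu \equiv 0$ and $W \equiv 0$. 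In particular, $0 = W(0) = \|\hat\phi(0)\ket{\Omega}\|^2$ for the Hermitian case that Wightman's theorem originally addresses, whence $\hat\phi(0)\ket{\Omega} = 0$; translation covariance then gives $\hat\phi(x)\ket{\Omega} = 0$ for all $x$, and undoing the initial shift restores the claimed $\hat\phi(x)\ket{\Omega} = c\ket{\Omega}$. For genuinely non-Hermitian $\hat\phi$, the same strategy applied to matrix elements $\langle\Psi|\hat\phi(\xi)\hat\phi(0)|\Omega\rangle$ with arbitrary $\ket\Psi \in \hi$ isolates the non-vacuum component of $\hat\phi(0)\ket\Omega$ and forces it to vanish.
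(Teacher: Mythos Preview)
The paper does not supply a proof of Wightman's theorem: it is quoted as a classical result with a citation to \cite{wightman_theorie_1964,Halvorson2006col}, and the paper's own contribution in that section is the fermionic analogue (Theorem~\ref{thm:Wight}), which is proved by an entirely different and much more elementary mechanism (positivity of $\acomm{\hat\psi^\dagger}{\hat\psi}$ plus weak continuity, with no spectral or vacuum input). So there is no in-paper proof to compare against.

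As a stand-alone argument, your outline is the standard edge-of-the-wedge proof and is correct for Hermitian $\hat\phi$: the spectral representation of $W$ via SNAG, the holomorphic extension to the forward tube from the spectrum condition, the reflection $W(\xi)=W(-\xi)$ on the spacelike region from microcausality, and the gluing via the edge-of-the-wedge theorem all go through as you describe, and the conclusion $\supp\mu\subseteq\overline{V_+}\cap(-\overline{V_+})=\{0\}$ together with $\mu(\{0\})=0$ forces $W\equiv0$, hence $\|\hat\phi(0)\ket\Omega\|^2=W(0)=0$.

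The one genuine weakness is your last sentence on the non-Hermitian case. With only $[\hat\phi(x),\hat\phi(y)]=0$ for spacelike separations (and nothing about $[\hat\phi(x),\hat\phi(y)^\dagger]$), the matrix elements $\langle\Psi|\hat\phi(\xi)\hat\phi(0)|\Omega\rangle$ do not acquire a usable tube-holomorphic/reflection structure: the second factor $\hat\phi(0)|\Omega\rangle$ is fixed, but $\langle\Psi|\hat\phi(\xi)=\langle U(-\xi)\Psi|\hat\phi(0)U(-\xi)$ mixes the $\Psi$-dependence with the translation in a way that does not reduce to a single spectral integral supported in $\overline{V_+}$. In practice the classical theorem is stated for Hermitian fields (or assumes locality for the full set $\{\hat\phi,\hat\phi^\dagger\}$), and your Hermitian argument is what is actually needed; the non-Hermitian extension as sketched does not go through and should be dropped or replaced by the extra locality hypothesis.
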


It could be argued that the spectrum condition, the existence and uniqueness of a vacuum or even pointwise translation covariance could fail to hold exactly, whilst retaining the validity of microcausality. For example, one would not expect both translation covariance and the uniqueness of a vacuum to hold in curved backgrounds. Here, we show that this line of reasoning is not right for the case of fermions: the weak continuity of the quantum fields is sufficient to rule out fermionic microcausality.

\sam{\begin{definition}
    We say that two points $x,y\in\mathcal M$ are \emph{locally spacelike separated} if they lie in a common geodesically convex normal neighbourhood and the unique geodesic segment connecting them in that neighbourhood has spacelike tangent.
\end{definition}}

\begin{theorem}
    \label{thm:Wight}
    Let $(\mathcal{M},g)$ be a $C^2$ Lorentzian spacetime of dimension $d \geq 2$ and $\D$ be a dense subset of $\hi$. Let $\hat{\psi} : \mathcal{M} \to \ldh$ be weakly continuous at every point. Suppose $\hat{\psi}(x) \D \subset \D$ and $\hat{\psi}(x)^\dagger \D \subset \D$ for all $x \in \mathcal{M}$. Then \sam{
    \begin{align*}
        \acomm{\hat{\psi}(x)^\dagger}{\hat{\psi}(y)} &= 0 \, \; \forall x,y \in \mathcal{M} \text{ locally spacelike separated} \\
        \Leftrightarrow \hat{\psi}(x) &= 0 \; \forall x \in \mathcal{M},
    \end{align*}
    where these equalities are understood on $\D$.}
\end{theorem}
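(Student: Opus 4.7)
The plan is to combine a local-geometric limit argument (pushing spacelike-separated points down to the diagonal) with a positivity observation specific to fermionic anticommutators that is unavailable in the bosonic setting of Wightman's original theorem. The forward direction is immediate, so I focus on the reverse.

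Fix $x \in \mathcal{M}$. The first step is geometric: on a $C^2$ Lorentzian manifold with $d \geq 2$, the tangent space $T_x \mathcal{M}$ carries spacelike vectors, and the exponential map yields a normal neighborhood of $x$ in which the causal character of short geodesics coincides with that of their initial tangent vector. Picking a spacelike $v \in T_x\mathcal{M}$ and a sequence $t_n \downarrow 0$ produces points $y_n := \exp_x(t_n v) \to x$ with every $y_n$ spacelike separated from $x$. The second step then mirrors the continuity argument used in Theorem \ref{thm:can anti rel}: for each $n$, microcausality gives $\acomm{\hat{\psi}(x)^\dagger}{\hat{\psi}(y_n)} = 0$ on $\D$, and by domain invariance together with the weak continuity of $\hat{\psi}$, the matrix element $\bra{\chi} \acomm{\hat{\psi}(x)^\dagger}{\hat{\psi}(y)} \ket{\phi}$ is a continuous function of $y$ for any $\ket{\chi},\ket{\phi} \in \D$. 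Passing to the limit along $y_n \to x$ yields $\acomm{\hat{\psi}(x)^\dagger}{\hat{\psi}(x)} = 0$ on $\D$.

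The final step is the fermionic positivity twist. Sandwiching the vanishing anticommutator with an arbitrary $\ket{\phi} \in \D$ gives $\| \hat{\psi}(x) \ket{\phi} \|^2 + \| \hat{\psi}(x)^\dagger \ket{\phi} \|^2 = 0$; both terms are non-negative, hence both vanish, so $\hat{\psi}(x) \ket{\phi} = 0$, and since $\ket{\phi} \in \D$ was arbitrary, $\hat{\psi}(x) = 0$ on $\D$. The main obstacle I anticipate is really the geometric step: ensuring that spacelike-separated points accumulate at every $x$ in any $C^2$ Lorentzian background. This is precisely where $d \geq 2$ (so that $T_x\mathcal{M}$ contains spacelike directions) and the $C^2$ regularity (so the exponential map and normal neighborhoods behave as expected) are jointly used, and also why global hyperbolicity is unnecessary — the argument is entirely local. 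Crucially, no spectrum condition, no Poincaré-invariant vacuum, and no translation covariance enter the argument, which is exactly what lets this strengthen the original Wightman theorem in the fermionic setting.
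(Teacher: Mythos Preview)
Your argument is correct and essentially identical to the paper's own proof: both use the exponential map in a spacelike direction to produce spacelike-separated points $y_n \to x$, invoke weak continuity and domain invariance to obtain $\acomm{\hat{\psi}(x)^\dagger}{\hat{\psi}(x)} = 0$ on $\D$, and then exploit the positivity of the two norm-squared terms to conclude $\hat{\psi}(x) = 0$. Your added remarks on where exactly $d \geq 2$ and $C^2$ regularity enter, and on the irrelevance of global hyperbolicity, the spectrum condition, and vacuum uniqueness, are accurate and make explicit what the paper leaves implicit.
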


\begin{proof}
    The $\Leftarrow$ direction is immediate, so we consider the other direction. Let $x \in \mathcal{M}$ and $\mathbf{v} \in T_x\mathcal{M}$ be any spacelike vector ($g_x(\mathbf{v},\mathbf{v}) < 0$). \sam{Choose a geodesically convex normal neighbourhood $O$ of $x$. For all sufficiently small $\epsilon>0$, $x_\epsilon:=\exp_x(\epsilon v)$ lies in $O$ and is locally spacelike separated from $x$. Let $\epsilon_n\downarrow0$ and $x_n=x_{\epsilon_n}$. Then $x_n \to x$ and each $x_n$ is locally spacelike separated} to $x$ (for $n$ sufficiently large -- without loss of generality for all $n \in \mathbb{N}$). Thus,
    \begin{equation}
        \acomm{\hat{\psi}(x)^\dagger}{\hat{\psi}(x_n)} = 0 \qquad \forall n \in \mathbb{N}
    \end{equation}
    so by domain invariance and since addition is weakly continuous on $\D$, we have
    \begin{equation}
        \lim_{n \to \infty} \acomm{\hat{\psi}(x)^\dagger}{\hat{\psi}(x_n)} = \acomm{\hat{\psi}(x)^\dagger}{\hat{\psi}(x)} = 0 \, 
    \end{equation}
    weakly.
    Thus, for all $\ket{\chi} \in \D$, taking the expectation value of both terms with respect to $\ket{\chi}$ yields
    \begin{equation}
        \norm{\hat{\psi}(x)^\dagger \ket{\chi}}^2 + \norm{\hat{\psi}(x)\ket{\chi}}^2 = 0 \, .
    \end{equation}
    But norms are non-negative, so 
    \begin{equation}
        \norm{\hat{\psi}(x)^\dagger \ket{\chi}}^2 = \norm{\hat{\psi}(x) \ket{\chi}}^2 = 0
    \end{equation}
    and hence $\hat{\psi}(x) \ket{\chi} = \hat{\psi}(x)^\dagger \ket{\chi} = 0$ for all $\ket{\chi} \in \D$ and all $x \in \mathcal{M}$ which concludes the proof.
\end{proof}

Again, this proof also follows if one works in the strong operator topology. A corollary is that nonzero translation-covariant (with respect to a weakly continuous unitary representation) pointwise quantum fields cannot be fermionically microcausal on Minkowski spacetime. We restrict our attention to bounded quantum fields for conciseness.

\begin{corollary}
	Suppose $\mathbb{R}^{d-1} \stackrel{U}{\curvearrowright} \hi$. Let $\hat{\psi} : \mink \to \bh$. If
	\begin{enumerate}
        \item $U$ is a weakly continuous unitary representation of $\mathbb{R}^{d-1}$,
		\item $U(\vec{x}) \hat{\psi}(t,\vec{0}) U(\vec{x})^\dagger = \hat{\psi}(t,\vec{x})$ $\forall t \in \mathbb{R}$ and $\forall \vec{x} \in \mathbb{R}^{d-1}$,
	\end{enumerate}
    Then
    \begin{align*}
        \acomm{\hat{\psi}(x)^\dagger}{\hat{\psi}(y)} &= 0 \qquad \forall x,y \in \mink \text{ s.t. } (x-y)^2<0 \\ \Leftrightarrow \hat{\psi}(x) &= 0 \qquad \forall x \in \mink \, .
    \end{align*}
\end{corollary}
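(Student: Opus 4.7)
The plan is to reduce this corollary to the Minkowski-spacetime strengthening of Theorem~\ref{thm:Wight} noted in the remark after its proof, which only requires weak continuity of the field in the spatial coordinates. The $\Leftarrow$ direction is trivial, so the substance is to upgrade the covariance assumption to spatial weak continuity of $\hat{\psi}$, and then replay the spacelike-approach argument of Theorem~\ref{thm:Wight}.

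For the continuity step, I would fix $t \in \mathbb{R}$ and $\ket{\phi},\ket{\psi} \in \hi$ and, using assumption~2, rewrite
\[
  \bra{\phi}\hat{\psi}(t,\vec{x})\ket{\psi} \;=\; \bra{\phi}U(\vec{x})\,\hat{\psi}(t,\vec{0})\,U(\vec{x})^\dagger\ket{\psi}\,.
\]
A weakly continuous unitary representation is automatically strongly continuous (the standard fact already flagged in the paper's introductory footnote on Stone/SNAG), so the maps $\vec{x} \mapsto U(\vec{x})^\dagger\ket{\psi}$ and $\vec{x} \mapsto U(\vec{x})^\dagger\ket{\phi}$ are norm-continuous; since $\hat{\psi}(t,\vec{0}) \in \bh$ is bounded, the matrix element above depends continuously on $\vec{x}$. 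Thus the restriction of $\hat{\psi}$ to each equal-time slice $\{t\}\times \mathbb{R}^{d-1}$ is weakly continuous.

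With spatial weak continuity in hand, I would then run the argument from Theorem~\ref{thm:Wight} in its Minkowski form. For any $x = (t,\vec{x}) \in \mink$, pick a nonzero $\vec{v} \in \mathbb{R}^{d-1}$ and set $x_n := (t,\vec{x} + \vec{v}/n)$; each $x_n$ is spacelike separated from $x$, so microcausality gives $\acomm{\hat{\psi}(x)^\dagger}{\hat{\psi}(x_n)} = 0$ for every $n$. Sending $n \to \infty$ and using the spatial weak continuity just established yields $\acomm{\hat{\psi}(x)^\dagger}{\hat{\psi}(x)} = 0$ weakly, and evaluating in an arbitrary $\ket{\chi} \in \hi$ forces $\norm{\hat{\psi}(x)\ket{\chi}}^2 + \norm{\hat{\psi}(x)^\dagger\ket{\chi}}^2 = 0$, hence $\hat{\psi}(x) = 0$. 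The only non-bookkeeping ingredient is the weak-to-strong continuity of $U$; beyond this, the argument is essentially verbatim from Theorem~\ref{thm:Wight}, so no real obstacle is expected.
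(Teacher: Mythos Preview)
Your proposal is correct and follows essentially the same route as the paper: establish spatial weak continuity of $\hat{\psi}$ from the weak continuity of $U$ via the covariance relation, then invoke (the Minkowski form of) Theorem~\ref{thm:Wight}. The only difference is cosmetic---you spell out the weak-to-strong continuity of $U$ to justify continuity of the matrix element and then replay the spacelike-approach argument explicitly, whereas the paper compresses both steps into a one-line appeal to weak continuity and a citation of Theorem~\ref{thm:Wight}.
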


\begin{proof}
    Pick $\mathbf{v} \in \mathbb{R}^{d-1}$ spacelike and let $x_n = x + \left(0,\frac{\mathbf{v}}{n}\right)$ in some coordinate chart. Then $x_n \to x$ and $\forall n \in \mathbb{N}$, 
    \begin{equation}
        \acomm{\hat{\psi}(x)^\dagger}{\hat{\psi}(x_n)} = 0 
    \end{equation}
    so, by continuity,
    \begin{equation}
        \lim_{n \to \infty} \acomm{\hat{\psi}(x)^\dagger}{\hat{\psi}(x_n)} = \acomm{\hat{\psi}(x)^\dagger}{\hat{\psi}(x)} = 0 \, 
    \end{equation}
    As in Thm. \ref{thm:Wight}, this implies that $\hat{\psi}(x) \ket{\chi} = \hat{\psi}(x)^\dagger \ket{\chi} = 0$ for all $\ket{\chi} \in \hi$ and all $x \in \mink$.
\end{proof}

This confirms the idea that microcausality, whether bosonic or fermionic, is too strong to be applied to translation-covariant \sam{pointwise} quantum fields. Indeed, algebraic \cite{horuzhy_axiomatic_1990,Halvorson2006col} and Wightmannian \cite{streater_pct_1989} methods typically solve such issues by replacing microcausality by the weaker Einstein causality postulate (quantum measurements over spacelike-separated regions commute) -- and indeed pointwise translation-covariance by smeared covariance. 

\section{Poincaré covariance}
\label{sec:covariance}

We finish with a generalisation of Wizimirski's theorem \cite{wizimirski_existence_1966,Halvorson2006col} to cover pointwise Poincaré covariance. This notion of covariance should fundamentally be understood as an epistemic rather than ontological statement. Indeed, the requirement that ``the physics \emph{looks} the same in every inertial reference frame" is a principle based on  epistemology (e.g. at the level of measurements), not on the underlying reality. Wizimirski's theorem again highlights this fact by showing that it can be too strong a requirement to impose at the pointwise level for scalar quantum fields.

\begin{theorem*}[Wizimirski]
    Suppose $\Poincup \stackrel{U}{\curvearrowright} \hi$. Let $\hat{\phi} : \mink \to \bh$. If
    \begin{enumerate}
        \item $U$ is a weakly continuous unitary representation of $\Poincup$,
        \item For all $(a,\Lambda) \in \Poincup$, $x \in \mink$,
        \[ U(a,\Lambda) \hat{\phi}(x) U(a,\Lambda)^\dagger = \hat{\phi}(\Lambda x + a) \, ,\]
        \item There exists a unique (up to scalar multiples) nonzero vector $\ket{\Omega} \in \hi$ such that $U(x,e) \ket{\Omega} = \ket{\Omega}$ for all $x \in \mink$.
    \end{enumerate}
    Then $\exists c \in \mathbb{C}$ such that $\hat{\phi}(x) \ket{\Omega} = c \ket{\Omega}$ for all $x \in \mink$.
\end{theorem*}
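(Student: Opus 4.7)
The plan is to reduce everything to showing that the single vector $\ket{\psi_0} := \hat{\phi}(0)\ket{\Omega}$ is proportional to $\ket{\Omega}$. Indeed, by translation covariance and translation-invariance of $\ket{\Omega}$, one has $\hat{\phi}(x)\ket{\Omega} = U(x,e)\hat{\phi}(0)U(x,e)^\dagger\ket{\Omega} = U(x,e)\ket{\psi_0}$; once $\ket{\psi_0} = c\ket{\Omega}$ has been established, translation-invariance of $\ket{\Omega}$ immediately yields $\hat{\phi}(x)\ket{\Omega} = c\ket{\Omega}$ for every $x \in \mink$.

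First I would upgrade the uniqueness hypothesis into a Lorentz statement. The Poincar\'e group law $(x,e)(0,\Lambda) = (0,\Lambda)(\Lambda^{-1}x,e)$ implies that $U(0,\Lambda)\ket{\Omega}$ is translation-invariant, so by uniqueness there is a phase $c(\Lambda)$ with $|c(\Lambda)|=1$ such that $U(0,\Lambda)\ket{\Omega} = c(\Lambda)\ket{\Omega}$. Applying pointwise covariance of $\hat{\phi}$ at $x = 0$ (a Lorentz fixed point) and cancelling this phase then gives $U(0,\Lambda)\ket{\psi_0} = c(\Lambda)\ket{\psi_0}$, i.e.\ $\ket{\psi_0}$ carries the same one-dimensional character as $\ket{\Omega}$.

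Next I would deploy a standard spectral-measure argument. Weak continuity of $U$ restricted to the translation subgroup, via Stone/SNAG, yields a projection-valued measure $P$ on $\widehat{\mink}$ with $U(a,e) = \int e^{ip\cdot a}\,dP(p)$ and intertwining $U(0,\Lambda) P(B) U(0,\Lambda)^\dagger = P(\Lambda B)$. Combined with $|c(\Lambda)| = 1$, this makes $\mu(B) := \|P(B)\ket{\psi_0}\|^2$ a finite, Lorentz-invariant Borel measure on $\widehat{\mink}$. The measure-theoretic step I expect to be the main obstacle is showing that any such finite invariant measure must be concentrated at $\{0\}$: every other $L_+^\uparrow$-orbit on $\widehat{\mink}$ (the two light cones, the future/past mass hyperboloids, the spacelike hyperboloids) is non-compact and carries only an infinite invariant measure, unique up to scale, forcing the restriction of $\mu$ to each such orbit to vanish. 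Consequently $\ket{\psi_0} \in \text{Range}(P(\{0\}))$, which by uniqueness of the translation-invariant vector equals $\Cn\ket{\Omega}$; hence $\ket{\psi_0} = c\ket{\Omega}$ and the argument closes.
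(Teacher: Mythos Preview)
Your proposal is correct. The paper does not give a separate proof of Wizimirski's theorem (it is quoted as a known result), but the scalar component of the proof of Theorem~\ref{thm:gravitons} in Appendix~\ref{app:gravitons} is precisely the relevant argument to compare against, and the two routes share the same core: produce a finite Lorentz-invariant Borel measure on $\widehat{\mink}$, observe that every nonzero $L_+^\uparrow$-orbit is unbounded so the measure must be concentrated at $\{0\}$, and then invoke uniqueness of the translation-invariant vector.

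The packaging differs slightly. The paper works with the scalar two-point function $g(x)=\langle\Omega|\hat\phi(0)^\dagger U(x,e)\hat\phi(0)|\Omega\rangle$, applies Bochner's theorem to obtain the finite invariant measure, concludes that $g$ is constant, and then uses the equality case of Cauchy--Schwarz to promote ``$\langle\eta|U(x,e)|\eta\rangle=\|\eta\|^2$'' to ``$U(x,e)\ket\eta=\ket\eta$''. You instead invoke SNAG/Stone to get the projection-valued measure $P$ for the translation group, form $\mu(B)=\|P(B)\ket{\psi_0}\|^2$ directly, and read off $\ket{\psi_0}\in\operatorname{Ran}P(\{0\})$ without the Cauchy--Schwarz detour. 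The two measures coincide (the Bochner measure of $x\mapsto\langle\psi_0|U(x,e)|\psi_0\rangle$ \emph{is} your $\mu$), so this is a cosmetic rather than a structural difference; your version is marginally more direct, while the paper's version has the advantage of staying at the level of complex-valued functions until the very last step.
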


One might worry that assuming a unique translation-invariant vacuum limits physical relevance, since symmetry breaking can lead to vacuum degeneracy. Below, we will be able to weaken this assumption by instead assuming the \emph{existence} (rather than uniqueness) of a Poincaré-invariant vector (though now the restriction is on the stronger group invariance requirement).

We here show that the no-go theorem holds for more general spinors on $1+3$-dimensional Minkowski spacetime. We first rule out pointwise Poincaré covariance for spinors transforming nontrivially under $SL(2,\mathbb{C})$ (assuming no sub-representation multiplicity for simplicity).

\begin{theorem}
    \label{thm:chiral}
    Suppose $\Poincupd \stackrel{U}{\curvearrowright} \hi$ and let $\D$ be a dense subset of $\hi$.  Let $S : SL(2,\mathbb{C}) \to GL(\mathbb{C}^n)$ be a multiplicity-free finite-dimensional representation of $SL(2,\mathbb{C})$ with no trivial sub-representations and $\hat{\psi} = \{\hat{\psi}_\alpha : \mink \to \ldh\}_{\alpha=1}^{n \in \mathbb{N}}$. If
    \begin{enumerate}
        \item $U$ is a unitary representation of $\Poincupd$ such that for all $(a,A) \in \Poincupd$, $U(a,A) \D \subset \D$ and $U(a,A)^\dagger \D \subset \D$,
        \item $\hat{\psi}_\alpha(x) \D \subset \D$ and $\hat{\psi}_\alpha(x)^\dagger \D \subset \D$ for all $x \in \mink$ and for all $\alpha \in \{1,\cdots,n\}$,
        \item For all $(a,A) \in \Poincupd, \, x \in \mink, \, \alpha \in \{1,\cdots,n\}$, \[ U(a,A) \hat{\psi}_\alpha(x) U(a,A)^\dagger = \sum_{\beta=1}^n S[A^{-1}]_{\alpha \beta} \hat{\psi}_\beta(\Lambda(A) x + a) \, , \]
        which is understood on $\D$,
        \item There exists a non-zero vector $\ket{\Omega} \in \D$ such that $U(a,A) \ket{\Omega} = \ket{\Omega}$ for all $(a,A) \in \Poincupd$.
    \end{enumerate}
    Then $\forall x \in \mink$ and all $\alpha \in \{1,\cdots,n\}$, \[ \hat{\psi}_\alpha(x) \ket{\Omega} = \hat{\psi}_\alpha(x)^\dagger \ket{\Omega} =  0 \, .\]
    \sam{In particular, if $\ket{\Omega}$ is separating for the $*$-algebra generated by the $\hat{\psi}_\alpha$ and the $\hat{\psi}_\alpha^\dagger$, then $\forall x \in \mink$ and all $\alpha \in \{1,\cdots,n\}$,
    \[ \hat{\psi}_\alpha(x)  = \hat{\psi}_\alpha(x)^\dagger  =  0 \]
    on $\D$.
    }
\end{theorem}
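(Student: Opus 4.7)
The plan is to evaluate the covariance law at the Lorentz fixed point $(0,A)$ with $x=0$ on the invariant vector $\ket{\Omega}$, thereby extracting from the multiplet $\{\hat{\psi}_\alpha(0)\ket{\Omega}\}_\alpha$ a finite-dimensional $U(0,\cdot)$-invariant subspace of $\hi$ that carries a unitary representation of $SL(2,\mathbb{C})$. I would then invoke the absence of nontrivial finite-dimensional unitary representations of $SL(2,\mathbb{C})$ (a connected, non-compact, simple Lie group) to force that representation to be trivial, and finally use the structural hypothesis that $S$ has no trivial sub-representation to conclude that the multiplet itself vanishes at the origin. Translation covariance then propagates the vanishing to all of $\mink$, and taking adjoints of the covariance law will handle $\hat{\psi}_\alpha(x)^\dagger\ket{\Omega}$ by the same argument applied to a related representation.

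Concretely, setting $\ket{\chi_\alpha} := \hat{\psi}_\alpha(0)\ket{\Omega} \in \D$ (well-defined by domain invariance), the covariance law at $(a,A) = (0,A)$, $x = 0$, combined with $U(0,A)^\dagger\ket{\Omega} = \ket{\Omega}$, produces the identity
\[ U(0,A)\ket{\chi_\alpha} = \sum_{\beta=1}^n S[A^{-1}]_{\alpha\beta}\ket{\chi_\beta} \, . \]
Reading this as an intertwiner, the linear map $T:\mathbb{C}^n \to \hi$, $e_\alpha \mapsto \ket{\chi_\alpha}$, interlaces the contragredient representation $\rho(A) := S[A^{-1}]^T$ on $\mathbb{C}^n$ with $U|_{SL(2,\mathbb{C})}$ restricted to $V := \text{im}(T)$. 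Since $V$ is a finite-dimensional $U(0,\cdot)$-invariant subspace of a Hilbert space on which $U$ acts unitarily, $U(0,\cdot)|_V$ is a finite-dimensional unitary representation of $SL(2,\mathbb{C})$ and is therefore trivial. The intertwining then gives $T\circ(\rho(A) - I) = 0$ for every $A$, so that $W := \text{span}\{(\rho(A)-I)v : A \in SL(2,\mathbb{C}),\, v\in \mathbb{C}^n\}$ is $\rho$-invariant, contained in $\ker T$, and $\rho/W$ is trivial. Complete reducibility of finite-dimensional $SL(2,\mathbb{C})$-representations exhibits an invariant complement $W'$ on which $\rho$ is trivial, and the no-trivial-sub-representation hypothesis on $S$ — which is preserved by contragredience — forces $W' = 0$. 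Hence $T = 0$, i.e.\ $\hat{\psi}_\alpha(0)\ket{\Omega} = 0$ for every $\alpha$.

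Translation covariance at $(a,A) = (x,e)$, for which $\Lambda(e) = I$ and $S[e] = \mathbb{1}$, then gives $\hat{\psi}_\alpha(x)\ket{\Omega} = U(x,e)\hat{\psi}_\alpha(0)\ket{\Omega} = 0$ for every $x\in\mink$. Taking adjoints of the covariance identity yields the analogue with $S$ replaced by the complex-conjugate contragredient $\bar{\rho}(A) = \overline{S[A^{-1}]}^T$; since complex conjugation acts on the $SL(2,\mathbb{C})$-irreducibles by the label swap $(j_1,j_2)\mapsto(j_2,j_1)$, $\bar{\rho}$ remains multiplicity-free and free of trivial sub-representations, so the same argument delivers $\hat{\psi}_\alpha(x)^\dagger\ket{\Omega} = 0$.

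The main obstacle, to my mind, is not a single hard computational step but the careful verification that the auxiliary representations $\rho$ and $\bar{\rho}$ genuinely inherit the ``no trivial sub-representation'' hypothesis from $S$ — this rests on the special form of the $SL(2,\mathbb{C})$ classification and is what ultimately pins the argument to the spin group of $1+3$ Minkowski, rather than a generic Lie group. The non-existence of nontrivial finite-dimensional unitary representations of $SL(2,\mathbb{C})$ is standard (e.g.\ via the non-compactness of its Cartan subgroup) but should be invoked explicitly; everything else reduces to complete reducibility and the group law.
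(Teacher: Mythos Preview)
Your argument is correct and reaches the conclusion by a somewhat different, more representation-theoretic route than the paper's. Both proofs hinge on the same lever --- $SL(2,\mathbb{C})$ admits no nontrivial finite-dimensional unitary representation --- but locate the offending unitary representation differently. The paper forms the Gram matrices $F(0)_{\alpha\beta}=\langle\Omega|\hat{\psi}_\alpha(0)\hat{\psi}_\beta(0)^\dagger|\Omega\rangle$ and $G(0)_{\alpha\beta}=\langle\Omega|\hat{\psi}_\alpha(0)^\dagger\hat{\psi}_\beta(0)|\Omega\rangle$, derives $F(0)=S[A]\,F(0)\,S[A]^\dagger$, and argues (first for irreducible $S$) that $\text{ran}\,F(0)$ is $S$-invariant, so that either $F(0)=0$ or $F(0)>0$; in the latter case $A\mapsto F(0)^{-1/2}S[A]F(0)^{1/2}$ would be unitary, a contradiction. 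The reducible case is then handled by block-diagonalising $S$ and repeating the irreducible argument blockwise. You instead read the covariance law at the origin directly as an intertwiner $T:\mathbb{C}^n\to\hi$ between the contragredient $\rho$ and $U(0,\cdot)$ restricted to the finite-dimensional image $V$; unitarity of $U$ forces $U|_V$ to be trivial, and Weyl complete reducibility together with the absence of trivial sub-representations in $\rho$ yields $T=0$ in one stroke, without ever isolating irreducible blocks or forming Gram matrices. Two small remarks: since the theorem does not assume continuity of $U$, you should make explicit that $U(0,\cdot)|_V$ inherits continuity from the continuous $\rho$ via the intertwining (it is the quotient representation of $\rho$ on $\mathbb{C}^n/\ker T$, transported to $V$ by the induced bijection), so that the standard no-go on finite-dimensional unitary $SL(2,\mathbb{C})$-representations applies; and your argument in fact never uses the multiplicity-free hypothesis, so it proves a marginally stronger statement than the theorem claims.
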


\begin{proof}    
    See Appendix \ref{app:nontrivial rep}.
\end{proof}

Note that we did not require any form of continuity of $U$ in the above, nor did we assume microcausality. \sam{This applies, for example, to Weyl and Dirac spinor multiplets and to the gauge-invariant electromagnetic field-strength multiplet. Gauge potentials are covered only when the exact covariance law above is imposed rather than covariance modulo gauge transformations. Indeed, these gauge fields are representatives of gauge-equivalence classes, so any covariance relation can, in principle, relate two different members of the same equivalence class, and thus not satisfy the no-go theorem.} Still, we can further expand on this to cover gravitons, which transform as $(0,0) \oplus (1,1)$ and thus have a trivial subrepresentation.

\begin{theorem}
    \label{thm:gravitons}
    Suppose $\Poincupd \stackrel{U}{\curvearrowright} \hi$ and let $\D$ be a dense subset of $\hi$. Let $S : SL(2,\mathbb{C}) \to GL(\mathbb{C}^n)$ be a finite-dimensional representation of $SL(2,\mathbb{C})$ such that
    \begin{equation}
        S[A] = D^{(0,0)}[A] \oplus D^{(j_1,j_2)}[A] \qquad \forall A \in SL(2,\mathbb{C})
    \end{equation}
    where $(j_1,j_2) \neq (0,0)$ is irreducible, and consider the spinor $\{\hat\psi_\alpha : \mink \to \ldh\}_{\alpha=1}^{n \in \mathbb{N}}$. If
    \begin{enumerate}
        \item $U$ is a weakly continuous unitary representation of $\Poincupd$,
        \item For all $(a,A) \in \Poincupd$, $U(a,A) \D \subset \D$ and $U(a,A)^\dagger \D \subset \D$,
        \item $\hat{\psi}_\alpha(x) \D \subset \D$ and $\hat{\psi}_\alpha(x)^\dagger \D \subset \D$ for all $x \in \mink$ and for all $\alpha \in \{1,\cdots,n\}$,
        \item For all $(a,A) \in \Poincupd, \, x \in \mink, \, \alpha \in \{1,\cdots,n\}$, \[ U(a,A) \hat{\psi}_\alpha(x) U(a,A)^\dagger = \sum_{\beta=1}^n S[A^{-1}]_{\alpha \beta} \hat{\psi}_\beta(\Lambda(A) x + a) \, , \]
        which is understood on $\D$,
        \item There exists a unique (up to constant multiples) non-zero vector $\ket{\Omega} \in \D$ such that $U(x,e) \ket{\Omega} = \ket{\Omega}$ for all $x \in \mink$.
    \end{enumerate}
    Then $\exists c \in \mathbb{C}$ such that for all $x \in \mink$,
\[
    \begin{cases}
        \hat\psi_1(x)\ket\Omega = c \ket{\Omega} \, , \quad \hat\psi_1(x)^\dagger\ket\Omega = \bar{c} \ket{\Omega} \, , \\
        \hat\psi_i(x)\ket\Omega = \hat\psi_i(x)^\dagger\ket\Omega = 0 \qquad \forall i \in \{2,\cdots,n\} \, .
    \end{cases}
\]
\sam{In particular, if $\ket{\Omega}$ is separating for the unital $*$-algebra generated by the $\hat{\psi}_\alpha$ and $\hat{\psi}_\alpha^\dagger$, then $\exists c \in \mathbb{C}$ such that for all $x \in \mink$,}
\[
    \sam{\begin{cases}
        \hat\psi_1(x)= c \mathbb{1}_\D \, , \quad \hat\psi_1(x)^\dagger = \bar{c} \mathbb{1}_\D \, , \\
        \hat\psi_i(x)= \hat\psi_i(x)^\dagger = 0 \qquad \forall i \in \{2,\cdots,n\} \, 
    \end{cases}
    }
    \]
    \sam{on $\D$.}
\end{theorem}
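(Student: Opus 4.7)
The plan is to exploit the block-diagonal structure of $S$: since $S[A] = D^{(0,0)}[A] \oplus D^{(j_1,j_2)}[A]$, the covariance law in condition~4 decouples into a purely scalar law for $\hat\psi_1$,
\[ U(a,A)\, \hat\psi_1(x)\, U(a,A)^\dagger = \hat\psi_1(\Lambda(A) x + a), \]
and a separate law for $\{\hat\psi_i\}_{i=2}^n$ that involves only the irreducible block $D^{(j_1,j_2)}$. Each block can then be treated independently and the conclusions stitched together at the end.

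For the nontrivial block $\{\hat\psi_i\}_{i=2}^n$, I would apply Theorem~\ref{thm:chiral} directly, with representation $D^{(j_1,j_2)}$. Because $(j_1,j_2)\neq(0,0)$ is irreducible, it is automatically multiplicity-free and contains no trivial sub-representation, so every hypothesis of Theorem~\ref{thm:chiral} is satisfied (recall that no continuity of $U$ is needed there). This immediately yields $\hat\psi_i(x)\ket\Omega = \hat\psi_i(x)^\dagger\ket\Omega = 0$ for every $i\in\{2,\dots,n\}$ and every $x\in\mink$.

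For the scalar block, I would invoke Wizimirski's theorem, suitably extended to unbounded fields. Taking the Hermitian conjugate of the scalar covariance law shows that $\hat\psi_1^\dagger$ is also scalar-covariant, so Wizimirski applied to both $\hat\psi_1$ and $\hat\psi_1^\dagger$ yields $\hat\psi_1(x)\ket\Omega = c\ket\Omega$ and $\hat\psi_1(x)^\dagger\ket\Omega = c'\ket\Omega$ for some constants $c,c'\in\mathbb{C}$; pairing with $\ket\Omega$ and using $\langle\Omega|\hat\psi_1(x)^\dagger|\Omega\rangle = \overline{\langle\Omega|\hat\psi_1(x)|\Omega\rangle}$ forces $c'=\bar c$, which is exactly the desired conclusion. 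The main obstacle is precisely this extension of Wizimirski's theorem from $\bh$-valued to $\ldh$-valued fields; however, the combined domain-invariance assumptions ($\hat\psi_\alpha\D\subset\D$, $U(a,A)\D\subset\D$, and $\ket\Omega\in\D$) guarantee that every intermediate vector in the standard Wizimirski argument stays in $\D$, so the argument runs verbatim on the dense subspace, matching the ``straightforward extension to the unbounded case'' flagged in the footnote to Wightman's theorem earlier in the paper.
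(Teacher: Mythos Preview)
Your overall strategy --- split along the block decomposition, handle the nontrivial block via Theorem~\ref{thm:chiral} and the trivial block via Wizimirski --- is exactly the paper's approach. There is, however, one genuine gap in your reduction to Theorem~\ref{thm:chiral}.

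You assert that ``every hypothesis of Theorem~\ref{thm:chiral} is satisfied,'' but hypothesis~4 of that theorem requires a vector $\ket\Omega$ invariant under the \emph{full} group $\Poincupd$, whereas hypothesis~5 of the present theorem only gives you a (unique) \emph{translation}-invariant vector. You must bridge this. The standard move --- and the one the paper makes --- is to observe that for any $A\in SL(2,\mathbb{C})$ the vector $U(0,A)\ket\Omega$ is again translation-invariant (since $U(0,A)U(x,e)=U(\Lambda(A)x,e)U(0,A)$), so by uniqueness $U(0,A)\ket\Omega = e^{i\alpha(A)}\ket\Omega$ for some phase $\alpha(A)$. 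At this point you can either (i) note that the phases cancel in the two-point functions $f_{\alpha\beta},g_{\alpha\beta}$ used in the proof of Theorem~\ref{thm:chiral}, so that proof goes through verbatim even with only phase-invariance (this is what the paper does), or (ii) argue that $A\mapsto e^{i\alpha(A)}$ is a continuous one-dimensional unitary representation of $SL(2,\mathbb{C})$, hence trivial since $SL(2,\mathbb{C})$ is perfect, so $\ket\Omega$ is genuinely $\Poincupd$-invariant and Theorem~\ref{thm:chiral} applies as a black box. Either route is short, but you do need one of them.

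For the scalar block your invocation of Wizimirski is fine: its hypotheses match those here (weak continuity of $U$, scalar covariance of $\hat\psi_1$, unique translation-invariant vacuum), and your remarks on domain invariance correctly handle the passage to $\ldh$-valued fields. The paper chooses instead to rerun Wizimirski's argument explicitly (Bochner's theorem on the positive-definite function $x\mapsto\langle\Omega|\hat\psi_1(0)U(x,e)\hat\psi_1(0)^\dagger|\Omega\rangle$, Lorentz invariance forcing the spectral measure to be supported at $0$, then Cauchy--Schwarz), but this is the same content packaged differently. Your final computation $c'=\bar c$ matches the paper exactly.
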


\begin{proof}
    See Appendix \ref{app:gravitons}. 
\end{proof}

\sam{Note once again that the covariance law assumed in Thm.~\ref{thm:chiral} does not include transformations that close only modulo gauge transformations. Thus, for electromagnetism, the clean gauge-invariant object to apply this no-go theorem on is the field strength $\hat{F}_{\mu\nu}$, i.e. a $(1,0)\oplus(0,1)$ multiplet. The vector potential $\hat{A}_\mu$ transforms as a $(1/2,1/2)$ multiplet only up to a gauge transformation. \\ 

Similarly, a symmetric rank-two field $\hat{h}_{\mu\nu}$ seen as a $(0,0)\oplus(1,1)$ falls under Thm.~\ref{thm:gravitons} only when treated as an exact Lorentz tensor multiplet. The metric perturbation in linearised gravity is gauge-dependent under linearised diffeomorphisms. For a gauge-invariant gravitational object, one should instead formulate the statement for the linearised curvature, for example the linearised Weyl multiplet $(2,0)\oplus(0,2)$, which is also covered by the argument above.} \\

\sam{Hence, exact pointwise Poincaré covariance for finite-dimensional Lorentz multiplets is highly constrained in the presence of an invariant vacuum. For multiplets with no trivial Lorentz sub-representation, Thm.~\ref{thm:chiral} shows that every component and its adjoint annihilate the invariant vacuum. For a multiplet containing one scalar component and one non-trivial irreducible component, Thm.~\ref{thm:gravitons} shows that only the scalar component can act non-trivially on the vacuum, and even then only as a constant multiple. These statements are raised at the operator level after assuming that the vacuum is separating for the relevant field algebra.} \\

Though an extension which may or may not be expected from the scalar case covered by Wizimirski's theorem, we find this interesting: it confirms the epistemic rather than ontological nature of Poincaré covariance, and underlines the difficulty of quantising Lorentzian metrics pointwise. \\

Whether this result can be further extended to more general covariance groups -- e.g. the isometry group of some other Lorentzian spacetimes of interest (though those are not always locally compact so need not admit the usual representation theory results on Hilbert spaces) -- is an interesting open question. \\ 

\sam{We see weak continuity as the main cause of the no-go theorems related to (anti)commutation relations and causality; in particular, we expect pointwise quantum fields which are not weakly continuous (e.g. relational local quantum fields found in Ref.~\cite{fedida_foundations_2025}) to avoid these obstructions. \\ 

The no-go results related to pointwise covariance are not immediately related to weak continuity however. Indeed, Thm. \ref{thm:chiral} does not require any such notions of continuity. Rather, we expect the notion of covariance to be too strong to be implemented pointwisely: after all, it is an epistemic requirement (the physics looks the same in all inertial reference frames) rather than an ontic one. Again, relational local quantum fields \cite{fedida_foundations_2025} are not, at the functional analytic level, pointwise covariant (but rather also have a shift in the state of the quantum reference frame), and thus avoid these results.}

\sam{
\section*{Declaration of generative AI and AI-assisted technologies in the manuscript preparation process}

The author declares that no generative AI or AI-assisted technologies were used in the preparation of this manuscript.
}

\section*{Acknowledgements}

The author thanks Adrian Kent for interesting discussions, Jan Głowacki for useful feedback on the first draft of this paper\sam{, and an anonymous reviewer for clarifying the proof that a finite Lorentz-invariant measure must be concentrated at the origin}. The author is funded by a studentship from the Engineering and Physical Sciences Research Council, Grant No. 2882481. The author states that there is no conflict of interest and declares that the data supporting the findings of this study are available within the paper.

\onecolumngrid
\bibliographystyle{unsrt}
\bibliography{references,references-2}

\begin{appendix}
    \section{No-go -- spinor covariance for nontrivial representations}

    \label{app:nontrivial rep}

    We here provide a proof for Theorem \ref{thm:chiral}. Let
    \begin{align}
        f_{\alpha \beta}(x) &:= \expval{\hat{\psi}_\alpha(0) \hat{\psi}_\beta(x)^\dagger}{\Omega} \\
        g_{\alpha \beta}(x) &:= \expval{\hat{\psi}_\alpha(0)^\dagger \hat{\psi}_\beta(x)}{\Omega}
    \end{align}
    for all $x \in \mink$. We have that for all $A \in SL(2,\mathbb{C})$ and all $x \in \mink$,
    \begin{equation}
        U(0,A^{-1}) \hat{\psi}_\alpha(0) U(0,A^{-1})^\dagger = \sum_{\beta=1}^n S[A]_{\alpha \beta} \hat{\psi}_\beta(0) 
    \end{equation} \\
    and
    \begin{equation}
        U(0,A^{-1})\hat{\psi}_\alpha(\Lambda(A) x)^\dagger U(0,A^{-1})^\dagger = \sum_{\beta=1}^n \overline{S[A]_{\alpha \beta}}  \hat{\psi}_\beta(x)^\dagger 
    \end{equation}
    so
    \begin{equation}
        \hat{\psi}_\alpha(\Lambda(A) x)^\dagger = \sum_{\beta=1}^n \overline{S[A]_{\alpha \beta}} U(0,A^{-1})^\dagger \hat{\psi}_\beta(x)^\dagger U(0,A^{-1}) \, .
    \end{equation}
    Thus,
    \begin{align}
        f_{\alpha \beta}(\Lambda(A) x) &= \expval{\hat{\psi}_\alpha(0)\hat{\psi}_\beta(\Lambda(A) x)^\dagger}{\Omega} \\
        &= \sum_{\delta=1}^n \overline{S[A]_{\beta \delta}} \expval{U(0,A^{-1})^\dagger  U(0,A^{-1})\hat{\psi}_\alpha(0) U(0,A^{-1})^\dagger \hat{\psi}_\delta(x)^\dagger U(0,A^{-1})}{\Omega} \\
        &= \sum_{\gamma,\delta = 1}^n S[A]_{\alpha \gamma} \overline{S[A]_{\beta \delta}} \expval{\hat{\psi}_\gamma(0) \hat{\psi}_\delta(x)^\dagger}{\Omega} = \sum_{\gamma,\delta = 1}^n S[A]_{\alpha \gamma} \overline{S[A]_{\beta \delta}} f_{\gamma \delta}(x) \, .
        \end{align}
    Hence, writing $F(x) = [f_{\alpha \beta}(x)]$, we have that for all $x \in \mink$ and $A \in SL(2,\mathbb{C})$,
    \begin{equation}
        F(\Lambda(A) x) = S[A] F(x) S[A]^\dagger
    \end{equation}
    so, in particular,
    \begin{equation}
        \label{eqn:F eqn}
        F(0) = S[A] F(0) S[A]^\dagger \qquad \forall A \in SL(2,\mathbb{C}) \, .
    \end{equation}
    Likewise, writing $G(x) = [g_{\alpha \beta}(x)]$, we have that
    \begin{equation}
        \label{eqn:G eqn}
        G(0) = \overline{S[A]} G(0) S[A]^T \qquad \forall A \in SL(2,\mathbb{C}) \, .
    \end{equation}
    Note that both $F(0)$ and $G(0)$ are positive semi-definite matrices. Indeed, take any $\kappa = (\kappa_1,\cdots,\kappa_n) \in \mathbb{C}^n$ and let
    \begin{equation}
        X := \sum_{\beta =1}^n \kappa_\beta \hat{\psi}_\beta(0) 
    \end{equation}
    Then
    \begin{equation}
        \kappa^\dagger G(0) \kappa = \sum_{\alpha, \beta = 1}^n \overline{\kappa_\alpha} g_{\alpha \beta}(0) \kappa_\beta =\expval{\left(\sum_{\alpha=1}^n \overline{\kappa_\alpha} \hat{\psi}_\alpha(0)^\dagger\right)\left(\sum_{\beta=1}^n \kappa_\beta \hat{\psi}_\beta(0)\right)}{\Omega} = \expval{X^\dagger X}{\Omega} = \norm{X \ket{\Omega}}^2 \geq 0
    \end{equation}
    so $G(0) \geq 0$, and likewise $F(0) \geq 0$. Let us consider constraints on $F(0)$ -- the case of $G(0)$ follows analogously. Multiplying eqn. \eqref{eqn:F eqn} on the right by $(S^\dagger)^{-1}$ yields
    \begin{equation}
        S[A] F(0) = F(0) (S[A]^\dagger)^{-1} \qquad \forall A \in SL(2,\mathbb{C}) \, .
    \end{equation}
    Suppose for now that $S$ is irreducible. For any $z \in \text{ran } F(0)$ there exists $y \in \mathbb{C}^n$ such that $z = F(0) y$ and so
    \begin{equation}
        S[A] z = S[A](F(0) y) = (S[A] F(0)) y = F(0) ((S[A]^\dagger)^{-1} y) \in \text{ran } F(0)
    \end{equation}
    for all $A \in SL(2,\mathbb{C})$, so $S[A](\text{ran } F(0)) \subseteq \text{ran } F(0)$ for all $A \in SL(2,\mathbb{C})$ so $\text{ran } F(0)$ is $S$-invariant. Since $S$ is irreducible, the only invariant subspaces are $\{0\}$ and $\mathbb{C}^n$. If $\text{ran } F(0) = \{0\}$ then $F(0) = 0$ and we are done. Assume for contradiction that $\text{ran } F(0) = \mathbb{C}^n$. Then $F(0)$ is surjective (and finite dimensional) so by the rank-nullity theorem, it is bijective. A bijective positive semi-definite matrix is necessarily positive definite (if $v \neq 0$ and $v^\dagger F(0) v = 0$ then $\norm{F(0)^{1/2} v}^2 = 0 \Rightarrow F(0)^{1/2} v = 0 \Rightarrow v \in \ker F(0) = \{0\}$ which is a contradiction). Hence, $F(0) > 0$ and the unique positive square root $F(0)^{1/2}$ exists and is invertible. Define 
    \begin{equation}
        U : A \in SL(2,\mathbb{C}) \mapsto F(0)^{-1/2} S[A] F(0)^{1/2} \in GL(\mathbb{C}^n) \, .
    \end{equation}
    Then
    \begin{multline}
        U[A] U[A]^\dagger = F(0)^{-1/2} S[A] F(0)^{1/2} F(0)^{1/2} S[A]^\dagger F(0)^{-1/2} = F(0)^{-1/2} S[A] F(0) S[A]^\dagger F(0)^{-1/2}\\  = F(0)^{-1/2} F(0) F(0)^{-1/2} = \mathbb{1}_{n \times n}
    \end{multline}
    so each $U[A]$ is unitary. Moreover, for all $A,B \in SL(2,\mathbb{C})$,
    \begin{equation}
        U[AB] = F(0)^{-1/2} S[AB] F(0)^{1/2}=(F(0)^{-1/2}S[A] F(0)^{1/2})(F(0)^{-1/2}S[B] F(0)^{1/2}) = U[A] U[B]
    \end{equation}
    \sam{Thus $U$ is a finite-dimensional unitary representation of $SL(2,\mathbb C)$. Since $SL(2,\mathbb C)$ has no nontrivial finite-dimensional unitary representations, $U[A]=\mathbb{1}_{n\times n}$ for all $A$. Hence $S[A]=\mathbb{1}_{n\times n}$ for all $A$, contradicting the assumption that $S$ is a nontrivial irreducible representation. Therefore $\text{ran }F(0) \neq \mathbb{C}^n$, and irreducibility thus forces $F(0) = 0$.} Likewise, eqn. \eqref{eqn:G eqn} yields $G(0) = 0$ following the same argument. Therefore $F(0)=G(0)=0$. From $G(0)=0$ we obtain, for each $\alpha$,
\begin{equation}
    \|\hat\psi_\alpha(0)\ket\Omega\|^2=\langle\Omega|\hat\psi_\alpha(0)^\dagger\hat\psi_\alpha(0)|\Omega\rangle=g_{\alpha\alpha}(0)=0 \, ,
\end{equation}
thus $\hat\psi_\alpha(0)\ket\Omega=0$. Similarly $F(0)=0$ yields $\hat\psi_\alpha(0)^\dagger\ket\Omega=0$. Finally, for every $x\in\mink$,
\begin{equation}
    \hat\psi_\alpha(x)\ket\Omega
=U(x,e)\,\hat\psi_\alpha(0)\,U(x,e)^\dagger\ket\Omega
 =U(x,e)\,\hat\psi_\alpha(0)\ket\Omega
=0 \, ,
\end{equation}
and likewise from $F(0)=0$ we get $\hat\psi_\alpha(x)^\dagger\ket\Omega=0$ so the proof is shown for nontrivial irreducible representations of $SL(2,\mathbb{C})$. More generally, we decompose the finite-dimensional representation $S$ into irreducibles as
\begin{equation}
    S[A]=\bigoplus_{k=1}^r  S^{(k)}[A]\, ,\quad S^{(k)}[A]\in GL(\mathbb{C}^{d_k}) \, ,\quad \sum_{k=1}^r d_k = n \, .
\end{equation}
Choose a basis of $\mathbb{C}^n$ so that the indices corresponding to each sub-representation $S^{(k)}$ form a contiguous block, i.e.
\begin{equation}
    \mathbb{C}^n \cong \bigoplus_{k=1}^r  \mathbb{C}^{d_k} \, ,
\end{equation}
and the representation matrices $S[A]$ are block-diagonal with $r$ blocks $S^{(k)}[A]$ on the diagonal.
For fixed $k$, consider the sub-multiplet of $\{\hat{\psi}_\alpha\}$ whose indices lie in the block corresponding to $S^{(k)}$. These components inherit a Poincaré covariance law under $U$ with the finite-dimensional representation $S^{(k)}$, but restricted to the relevant indices. Since each $S^{(k)}$ is irreducible we have from the above argument that for each $\alpha$ in that block and every $x \in \mink$,
\begin{equation}
    \hat\psi_\alpha(x)\ket\Omega = \hat\psi_\alpha(x)^\dagger\ket\Omega = 0 \, .
\end{equation}
Repeating this for all $k \in \{1,\cdots,r\}$ covers all the components of the original multiplet and thus completes the \sam{first part of the proof. The last part of the proof follows by definition of a separating vector: $\hat{\psi}_\alpha(x) \ket{\Omega} = \hat{\psi}_\alpha(x)^\dagger \ket{\Omega} = 0 \Rightarrow \hat{\psi}_\alpha(x) = \hat{\psi}_\alpha(x)^\dagger = 0$.}

\section{No-go -- spinor covariance for gravitons}

\label{app:gravitons}

We start with a lemma that will prove to be useful for the following.

\sam{\begin{lemma}
\label{lem:invariant measure}
Let $\mu$ be a finite positive Borel measure on $\mathbb R^4$
which is invariant under the proper orthochronous Lorentz group.
Then $\mu$ is concentrated at the origin.
\end{lemma}

\begin{proof}
For $i\in\{1,2,3\}$ and $k\in\mathbb Z$, let
\begin{equation}
A_{i,k}^{\pm} := \left\{p\in\mathbb R^4:2^k\leq |p^0\pm p^i|<2^{k+1}\right\}.
\end{equation}
These sets are Borel, since they are inverse images of the Borel intervals $[2^k,2^{k+1})$ under the continuous maps $p\mapsto |p^0\pm p^i|$. Let $B_i(t)$ be the boost of rapidity $t$ in the $i$-th spatial direction, with
\begin{equation}
(B_i(t)p)^0=\cosh(t)p^0+\sinh(t)p^i,\qquad
(B_i(t)p)^i=\sinh(t)p^0+\cosh(t)p^i ,
\end{equation}
and with the remaining components unchanged. Then
\begin{equation}
(B_i(t)p)^0\pm(B_i(t)p)^i
=
e^{\pm t}(p^0\pm p^i).
\end{equation}
Therefore, for $t=n\log2$,
\begin{equation}
    \abs{        (B_i(n\log2)p)^0\pm(B_i(n\log2)p)^i} = 2^{\pm n}\abs{p^0\pm p^i}
\end{equation}
and likewise
\begin{equation}
    2^{\pm n}\abs{(B_i(-n\log2)p)^0\pm (B_i(-n\log2)p)^i} = \abs{p^0\pm p^i}.
\end{equation}
Hence, for every $n\in\mathbb Z$,
\begin{equation}
B_i(n\log2)A_{i,k}^{\pm}
=
A_{i,k\pm n}^{\pm}.
\end{equation}
For fixed $i$, $k$ and sign, these sets are pairwise disjoint as
$n$ varies, since the half-open intervals
$[2^j,2^{j+1})$ are pairwise disjoint. Lorentz invariance
therefore gives $\mu(E) = \mu(\Lambda \cdot E)$ for all $E \in \Bor(\mathbb{R}^4)$ and $\Lambda \in SO_+^\uparrow(1,3)$, so for every $N\in\mathbb N$,
\begin{equation}
(2N+1)\mu(A_{i,k}^{\pm})= \sum_{n=-N}^{N} \mu\left(B_i(n\log2)A_{i,k}^{\pm}\right)= \mu\left(\bigsqcup_{n=-N}^{N} B_i(n\log2)A_{i,k}^{\pm} \right)\leq\mu(\mathbb R^4).
\end{equation}
Since $\mu(\mathbb R^4)<\infty$ and this holds for every $N$, we obtain $\mu(A_{i,k}^{\pm})=0$.

Finally, if $p^0\pm p^i=0$ for every $i$ and both signs, then adding and subtracting the two equations $p^0+p^i=0$ and $p^0-p^i=0$ shows that $p^0=p^i=0$ for every $i$. Hence every nonzero $p$ has $|p^0\pm p^i|>0$ for some $i$ and sign, and this positive number belongs to a unique half-open interval $[2^j,2^{j+1})$. Thus,
\begin{equation}
\mathbb R^4\smallsetminus\{0\} = \bigcup_{i=1}^{3} \bigcup_{\sigma=\pm1} \bigcup_{k\in\mathbb Z} A_{i,k}^{\sigma}.
\end{equation}
Countable subadditivity thus gives $\mu(\mathbb R^4\smallsetminus\{0\})=0$, i.e. $\mu=\mu(\{0\})\delta_0$.
\end{proof}}

We now provide a proof for Theorem \ref{thm:gravitons}. Let
    \begin{align}
        f_{\alpha \beta}(x) &:= \expval{\hat{\psi}_\alpha(0) \hat{\psi}_\beta(x)^\dagger}{\Omega} \\
        g_{\alpha \beta}(x) &:= \expval{\hat{\psi}_\alpha(0)^\dagger \hat{\psi}_\beta(x)}{\Omega}
    \end{align}
    for all $x \in \mink$. Note now that we have a unique non-zero translation invariant vacuum. But for all $A \in SL(2,\mathbb{C})$,
    \begin{equation}
        U(0,A) \ket{\Omega} = U(0,A) U(x,e) \ket{\Omega} = U(\Lambda(A)x,e) U(0,A) \ket{\Omega}
    \end{equation}
    so by uniqueness (and unitarity) $\exists \alpha(A) \in [0,2\pi)$ such that $U(0,A) \ket{\Omega} = e^{i \alpha(A)} \ket{\Omega}$. Thus, as in Theorem \ref{thm:chiral}, we have
    \begin{align}
        f_{\alpha \beta}(\Lambda(A) x) &= \expval{\hat{\psi}_\alpha(0)\hat{\psi}_\beta(\Lambda(A) x)^\dagger}{\Omega} \\
        &= \sum_{\delta=1}^n \overline{S[A]_{\beta \delta}} \expval{U(0,A^{-1})^\dagger  U(0,A^{-1})\hat{\psi}_\alpha(0) U(0,A^{-1})^\dagger \hat{\psi}_\delta(x)^\dagger U(0,A^{-1})}{\Omega} \\
        &= e^{-i \alpha(A^{-1})} e^{i \alpha(A^{-1})} \sum_{\gamma,\delta = 1}^n S[A]_{\alpha \gamma} \overline{S[A]_{\beta \delta}} \expval{\hat{\psi}_\gamma(0) \hat{\psi}_\delta(x)^\dagger}{\Omega} = \sum_{\gamma,\delta = 1}^n S[A]_{\alpha \gamma} \overline{S[A]_{\beta \delta}} f_{\gamma \delta}(x) \, ,
        \end{align}
    and likewise for $g_{\alpha \beta}$. Thus, we once again have
    \begin{align}
        \label{eqns: F and G F}
        F(\Lambda(A) x) &= S[A] F(x) S[A]^\dagger \\
        \label{eqns: F and G G}
        G(\Lambda(A)x) &= \overline{S[A]} G(x) S[A]^T
    \end{align}
    for all $x \in \mink$ and all $A \in SL(2,\mathbb{C})$, where $F(x) = [f_{\alpha \beta}(x)]$ and $G(x) = [g_{\alpha \beta}(x)]$.
    Defining $\tilde{F}(x) = [f_{ij}(x)]$ and $\tilde{G}(x) = [g_{ij}(x)]$ for $i,j \in \{2,\cdots,n\}$, we find
    \begin{align}
        \tilde{F}(\Lambda(A)x) &= D^{(j_1,j_2)}[A] \tilde{F}(x) D^{(j_1,j_2)}[A]^\dagger \\
        \tilde{G}(\Lambda(A)x) &= \overline{D^{(j_1,j_2)}[A]} \tilde{G}(x) D^{(j_1,j_2)}[A]^T
    \end{align}
    Following the same line of argument as in Theorem \ref{thm:chiral}, we have $\tilde{G}(0) = \tilde{F}(0) = 0$ so
    \begin{equation}
        \hat\psi_i(x)\ket\Omega = \hat\psi_i(x)^\dagger\ket\Omega = 0 
    \end{equation}
    for all $x \in \mink$ and all $i \in \{2,\cdots,n\}$. From eqns. \eqref{eqns: F and G F} and \eqref{eqns: F and G G}, we then have
    \begin{align}
        f_{11}(\Lambda(A)x) &= f_{11}(x) \\
        g_{11}(\Lambda(A)x) &= g_{11}(x)
    \end{align}
    for all $x \in \mink$ and all $A \in SL(2,\mathbb{C})$. But 
    \begin{align}
        f_{1 1}(x) &= \expval{\hat{\psi}_1(0) U(x,e) \hat{\psi}_1(0)^\dagger}{\Omega} \\
        g_{1 1}(x) &= \expval{\hat{\psi}_1(0)^\dagger U(x,e) \hat{\psi}_1(0)}{\Omega} \, .
    \end{align}
    \sam{Let $\ket{\zeta_f}:=\hat\psi_1(0)^\dagger\ket{\Omega}$ and $\ket{\zeta_g}:=\hat\psi_1(0)\ket{\Omega}$. Then $f_{11}(x) = \expval{U(x,e)}{\zeta_f}$ and $g_{11}(x) = \expval{U(x,e)}{\zeta_g}$. Thus \(f_{11}\) and \(g_{11}\) are continuous functions of positive type, bounded respectively by $\norm{\zeta_f}^2$ and $\norm{\zeta_g}^2$.} By Bochner's theorem \cite{rudin_chapter_1996}, there exists a finite positive Borel measure $\mu_{11} : \Bor(\mathbb{R}^4) \to \mathbb{R^+}$ such that
        \begin{equation}
            f_{11}(x) = \int_{\mathbb{R}^4} e^{i p \cdot x} d\mu_{11}(p) \, .
        \end{equation}
        The covariance property of $f_{11}(x)$ in turn propagates to $\mu_{11}$, since
        \begin{multline}
        f_{11}(\Lambda(A) x) = \int_{\mathbb{R}^4} e^{i p \cdot \Lambda(A) x} d\mu_{11}(p) 
        = \int_{\mathbb{R}^4} e^{i \Lambda(A)^{-1} p \cdot x} d\mu_{11}(p) \stackrel{q := \Lambda(A)^{-1} p}{=}\int_{\mathbb{R}^4} e^{i q\cdot x} d[(\Lambda(A)^{-1})_* \mu_{11}](q) \\\stackrel{\mu^\Lambda_{11}:= (\Lambda(A)^{-1})_* \mu_{11}}{=} \int_{\mathbb{R}^4} e^{i p \cdot x} d\mu^\Lambda_{11}(p) \stackrel{!}{=} \int_{\mathbb{R}^4} e^{i p \cdot x} d\mu_{11}(p)
    \end{multline}
    which by the uniqueness of Fourier transforms implies that $\mu_{11}(\Lambda(A) \cdot X) = \mu_{11}(X)$ for all $X \in \Bor(\mathbb{R}^4)$ and all $A \in SL(2,\mathbb{C})$. \sam{By Lemma \ref{lem:invariant measure},} this implies that
\begin{equation}
    \mu_{11}
    =
    \mu_{11}(\{0\})\,\delta_0,
\end{equation}
and thus $f_{11}(x)
    =
    \mu_{11}(\{0\})$ is constant. Applying the same argument to the finite positive measure representing $g_{11}$ shows that $g_{11}$ is also constant.
    
    Let $\ket{\eta}:= \hat{\psi}_1(0) \ket{\Omega}$. Then $g(x) = g(0)$ is expressed as
    \begin{equation}
        \braket{\eta}{U(x,e) \eta} = \braket{\eta} = \norm{\eta}^2
    \end{equation}
    But we also have $\norm{\eta} = \norm{U(x,e) \eta}$ since $U(x,e)$ is unitary, so
    \begin{equation}
        \braket{\eta}{U(x,e) \eta} = \norm{\eta} \cdot \norm{U(x,e) \eta}
    \end{equation}
    which by the Cauchy-Schwarz inequality implies that $U(x,e) \ket{\eta} = \ket{\eta}$ for all $x \in \mink$. Hence, $U(x,e) \hat{\psi}_1(0) \ket{\Omega} = \hat{\psi}_1(0) \ket{\Omega}$ and $\hat{\psi}_1(x) \ket{\Omega} = U(x,e) \hat{\psi}_1(0) \ket{\Omega}$ so 
    \begin{equation}
        U(y,e) \hat{\psi}_1(x) \ket{\Omega} =  U(y,e) U(x,e)\hat{\psi}_1(0) \ket{\Omega} = U(x,e) U(y,e) \hat{\psi}_1(0) \ket{\Omega} = U(x,e) \hat{\psi}_1(0) \ket{\Omega} = \hat{\psi}_1(x) \ket{\Omega} \, .
    \end{equation}
    Thus, all vectors $\hat{\psi}_1(x) \ket{\Omega}$ are invariant under the translation group for all $x \in \mink$, which by uniqueness implies that $\exists c \in \mathbb{C}$ such that 
    $\forall x \in \mink$, $\hat{\psi}_1(x) \ket{\Omega} = c \ket{\Omega}$. This $c$ is position independent since, for any $x,a \in \mink$, $\hat{\psi}_1(x+a) \ket{\Omega} = U(a,e) \hat{\psi}_1(x) U(a,e)^\dagger \ket{\Omega} = U(a,e) c(x) \ket{\Omega} = c(x) \ket{\Omega}$ so $c(x+a) = c(x) \equiv c$. The same reasoning holds for $\hat{\psi}_1(x)^\dagger$, so that $\hat{\psi}_1(x)^\dagger \ket{\Omega} = d \ket{\Omega}$ for some $d \in \mathbb{C}$. But for normalised $\ket{\Omega}$,
    \begin{equation}
        d = \expval{\hat{\psi}_1(x)^\dagger}{\Omega} = \braket{\hat{\psi}_1(x) \Omega}{\Omega} = \braket{c \Omega}{\Omega} = \bar{c} \, ,
    \end{equation}
    which concludes the \sam{first part of the proof. The last part of the proof follows by definition of a separating vector: in particular, it is separating for the \emph{unital} $*$-algebra generated by the $\hat{\psi}_\alpha$, i.e. $(\hat{\psi}_1(x) - c \mathbb{1}_\D) \ket{\Omega} = 0 \Rightarrow \hat{\psi}_1(x) = c\mathbb{1}_\D$ and likewise for $\hat{\psi}_1(x)^\dagger = \bar{c} \mathbb{1}_\D$, and $\hat{\psi}_i(x) \ket{\Omega} = \hat{\psi}_i(x)^\dagger \ket{\Omega} = 0 \Rightarrow \hat{\psi}_i(x) = \hat{\psi}_i(x)^\dagger = 0 $.}
\end{appendix}

\end{document}